\newtheorem{lem}{\textit{Lemma}}
\newtheorem{theo}{\textit{Theorem}}
\newtheorem{rem}{\textit{Remark}}
\newcommand{\dB}{\,\mathrm{dB}}
\newcommand{\tr}{\,\mathrm{tr}}
\newcommand{\as}{\,\mathrm{a.s}}
\newcommand{\dtt}{\,\mathrm{det}}
\providecommand\add@text{}
\newcommand\tagaddtext[1]{%
  \gdef\add@text{#1\gdef\add@text{}}}%
\renewcommand\tagform@[1]{%
  \maketag@@@{\llap{\add@text\quad}(\ignorespaces#1\unskip\@@italiccorr)}%
}
\begin{document}
\title{Jamming Detection in Massive MIMO Systems}

\author{Hossein Akhlaghpasand, S. Mohammad Razavizadeh, Emil Bj\"{o}rnson, and Tan Tai Do
\thanks{\IEEEauthorblockA{Hossein Akhlaghpasand, S. Mohammad Razavizadeh are with Iran University of Science and Technology, Iran (e-mail: h{\_}akhlaghpasand@elec.iust.ac.ir, smrazavi@iust.ac.ir). Emil Bj\"{o}rnson is with Link\"{o}ping University, Sweden (email: emil.bjornson@liu.se). Tan Tai Do is with Ericsson AB, Sweden. (e-mail: ttdo@kth.se).}}}

\maketitle

\begin{abstract}
This paper considers the physical layer security of a pilot-based massive multiple-input multiple-output (MaMIMO) system in presence of a multi-antenna jammer. To improve security of the network, we propose a new jamming detection method that makes use of a generalized likelihood ratio test over some coherence blocks. Our proposed method utilizes intentionally unused pilots in the network. The performance of the proposed detector improves by increasing the number of antennas at the base station, the number of unused pilots and also by the number of the coherence blocks that are utilized. Simulation results confirm our analyses and show that in the MaMIMO regime, perfect detection (i.e., correct detection probability is one) is achievable even with a small number of unused pilots.
\end{abstract}

\begin{IEEEkeywords}
Massive MIMO, physical layer security, jamming detection, unused pilots.
\end{IEEEkeywords}

\section{Introduction}\label{sec:Introduction}
Massive multiple-input multiple-output (MaMIMO) technology provides improvements of the physical layer security against passive eavesdropping \cite{r1}. To cause more harm, the eavesdropper can decide to \emph{actively} attack the legitimate system. The asymptotic secrecy rate for downlink MaMIMO systems is derived in \cite{r3} in the presence of an active multi-antenna eavesdropper. Furthermore, for detection of an active eavesdropper in a single-input multiple-output system, the authors in \cite{r4} propose a detection method based on random training. Detection methods based on cooperation between the base station (BS) and the legitimate user are also suggested in \cite{r5}. A \emph{jammer} can be viewed as an active eavesdropper, which transmits jamming signals in both the pilot and data phases. In MaMIMO systems, by jamming the pilot phase, pilot contamination is caused which significantly decreases the spectral efficiency of the legitimate users. In \cite{r6}, the authors propose a jamming detection and countermeasure scheme for MaMIMO systems based on random matrix theory. The authors in \cite{r7} propose a linear combining scheme that exploits unused pilots to improve the spectral efficiency of the MaMIMO systems in presence of a single-antenna jammer. In \cite{r8}, a jamming power allocation scheme is investigated for MaMIMO systems, where a smart jammer aims to maximize its deterioration effect on the spectral efficiency of the legitimate system.

In order to apply the known methods for combating jamming, we first need to detect the presence of the jammer. In this letter, we propose a new jamming detection method by exploiting a generalized likelihood ratio test (GLRT) in the MaMIMO systems. Our proposed detector utilizes the key properties of MaMIMO systems, i.e., many antennas at the BS and the use of uplink pilot transmission, in order to detect the jamming with high accuracy. To this end, the proposed method estimates the power of the multi-antenna jammer in some intentionally unused pilot sequences (i.e., the pilots that are not assigned to any active users, except during peak hours) over multiple coherence blocks. To evaluate performance of the proposed method, we derive closed-form expressions for the false alarm and correct detection probabilities. Our analysis shows that the jamming detection improves by increasing the number of antennas at the BS as well as the number of unused pilots. Simulation results show that in systems with a large number of BS antennas, perfect detection (i.e., correct detection probability is one) is achievable even with a small number of unused pilots.

\textit{Notations:} Matrices and vectors are denoted by boldface upper and lower case letters, respectively. $\boldsymbol{A}^*$, $\boldsymbol{A}^T$, $\boldsymbol{A}^H$, and $\tr(\boldsymbol{A})$ respectively denote conjugate, transpose, conjugate transpose and trace of a matrix $\boldsymbol{A}$. $\boldsymbol{I}_N$ is the $N \times N$ identity matrix. $\|\boldsymbol{x}\|$ denotes the Euclidean norm of a vector $\boldsymbol{x}$. Also, $\boldsymbol{0}$ and $\boldsymbol{1}$ are respectively the all-zero and all-one column vectors. $f\left( z ; s , \mathcal{H} \right)$ represents the likelihood function of a random variable $z$ under hypothesis $\mathcal{H}$ and given $s$.

\section{Problem Setup}\label{sec:Problem Setup}
In this section, we present the system model and the jammer's strategy to attack the legitimate system in the pilot phase.

\subsection{System Model}
Consider the uplink of a single-cell multi-user MIMO system with one $M_r$-antenna BS and $K$ legitimate single-antenna users. There is an $M_w$-antenna jammer in the network that aims to reduce the asymptotic spectral efficiency of the legitimate system. Each channel is modeled as Rayleigh block-fading, where it is constant within a coherence block of $T$ samples and has independent realizations between the blocks. Let $\tau$ ($\tau \leq T$) be the number of samples during a block spent on transmission of pilots for channel estimation. We assume that $\left \{\boldsymbol{\phi}_1, \ldots, \boldsymbol{\phi}_{\tau} \right \}$ is the set of $\tau$ orthonormal pilot sequences, in which $\boldsymbol{\phi}_i \in \mathbb{C}^{\tau \times 1}$ is the $i$th pilot and $\boldsymbol{\phi}_i^{T} \boldsymbol{\phi}_t^{*}=\bigg\{\begin{array}{lr}
1 , & t=i\\
0 , & t \neq i
\end{array}$. We denote by $\boldsymbol{g}_i\!\left(l\right) \in \mathbb{C}^{M_r \times 1} ,~ i=1, \ldots, K$ and $\boldsymbol{G}_w\!\left(l\right) \triangleq [\boldsymbol{g}^{\left(1\right)}_w\!\left(l\right),\ldots,\boldsymbol{g}^{\left(M_w\right)}_w\!\left(l\right)] \in \mathbb{C}^{M_r \times M_w}$ the channels from the $i$th legitimate user and the jammer to the BS in the $l$th block, respectively, which are defined by $\boldsymbol{g}_i\!\left(l\right)=\sqrt{\beta_i}\boldsymbol{h}_i\!\left(l\right)$ and $\boldsymbol{g}^{\left(j\right)}_w\!\left(l\right)=\sqrt{\beta_w}\boldsymbol{h}^{\left(j\right)}_w\!\left(l\right),~j=1,\ldots,M_w$. The scalars $\beta_i$, $\beta_w$ are the positive large-scale fading coefficients and $\boldsymbol{h}_i\!\left(l\right)$, $\boldsymbol{h}^{\left(j\right)}_w\!\left(l\right)$ represent fast fading in the $l$th block, which have independent and identically distributed (i.i.d.) $\mathcal{CN}\left(0,1\right)$ elements. Since the large-scale fading varies much slower than the fast fading, we assume that $\beta_i$, $\beta_w$ remain constant in $L$ coherence blocks which are distributed over the frequency domain and also over time.

In the pilot phase of the $l$th block, the users send their respective pilots $\boldsymbol{\phi}_i\!\left(l\right),~ i=1,\ldots,K$ to the BS. At the same time, the jammer transmits its jamming signals $\boldsymbol{\psi}^{\left(j\right)}_{w}\!\left(l\right) \in \mathbb{C}^{\tau \times 1},~j=1,\ldots,M_w$ to create pilot contamination that later will degrade the asymptotic spectral efficiency of the legitimate system. We study the single-cell scenario in this letter, which is not too different from a practical multi-cell scenario, where a pilot reuse factor of $3$, $4$, or $7$ is used \cite[Chapter 6]{r9}. Thus, the interference that comes from pilot reuse in other cells in the pilot phase is ignored (although, the interference in the data phase might be large). The received signal at the BS for a given block $l \in \left\{1,\ldots,L\right\}$ is modeled as
\begin{equation}\label{received_matrix}
\boldsymbol{Y}\!\left(l\right) = \sum_{i=1}^{K} \sqrt{\tau p} \boldsymbol{g}_{i}\!\left(l\right) \boldsymbol{\phi}^{T}_{i}\!\left(l\right) + \sqrt{\tau q} \boldsymbol{G}_w\!\left(l\right) \boldsymbol{V}_{w}\!\left(l\right) \boldsymbol{\Psi}^{T}_{w}\!\left(l\right) + \boldsymbol{N}\!\left(l\right) ,
\end{equation}
where $p$ represents the transmit powers of the users and $q$ represents the transmit power of the jammer per antenna in the pilot phase. The matrix $\boldsymbol{V}_{w}\!\left(l\right) \triangleq [\boldsymbol{v}^{{\left(1\right)}}_{w}\!\left(l\right),\ldots,\boldsymbol{v}^{{\left(M_w\right)}}_{w}\!\left(l\right)]^T \in \mathbb{C}^{M_w \times M_w}$ is the jamming precoder (which is independent of the channels to the BS) composed of the vectors $\boldsymbol{v}^{{\left(j\right)}}_{w}\!\left(l\right) \in \mathbb{C}^{M_w \times 1},~j=1,\ldots,M_w$ and $\boldsymbol{\Psi}_{w}\!\left(l\right) \triangleq [\boldsymbol{\psi}^{\left(1\right)}_{w}\!\left(l\right),\ldots,\boldsymbol{\psi}^{\left(M_w\right)}_{w}\!\left(l\right)] \in \mathbb{C}^{\tau \times M_w}$ is the jamming signal matrix. Furthermore, $\boldsymbol{N}\!\left(l\right) \in \mathbb{C}^{M_r \times \tau}$ denotes the normalized receiver noise matrix composed of the elements distributed as $\mathcal{CN}\left(0,1\right)$.

\subsection{Jammer's Strategy}
The jammer aims to maximize its negative impact on the legitimate system. Since one of the main advantages of using MaMIMO is achieving high spectral efficiency, a natural strategy that can be adopted by the jammer is to reduce the asymptotic spectral efficiency of the legitimate system. To analyze the worst-case impact of the jamming, we assume that the jammer is smart and knows the transmission protocol and the pilot set $\left \{\boldsymbol{\phi}_1, \ldots, \boldsymbol{\phi}_{\tau} \right \}$. The jammer can obtain this information by listening to the channel for some consecutive blocks. The jamming signals $\boldsymbol{\psi}^{\left(j\right)}_{w}\!\left(l\right)$ in \eqref{received_matrix} can be generally written as a combination of the orthonormal pilots as
\begin{equation}\label{Jam_pilot}
\boldsymbol{\psi}^{\left(j\right)}_{w}\!\left(l\right) = \sum_{i=1}^{\tau} \alpha^{\left(j\right)}_i\!\left(l\right) \boldsymbol{\phi}_{i} ,
\end{equation}
where $\alpha^{\left(j\right)}_i\!\left(l\right) \triangleq \boldsymbol{\psi}_{w}^{{\left(j\right)}^T}\!\left(l\right) \boldsymbol{\phi}_{i}^*$. The power constraint on the transmitted jamming signals is $\tr (\boldsymbol{\Gamma}^T_w\!\left(l\right) \boldsymbol{\Gamma}^*_w\!\left(l\right)) = M_w$, where $\boldsymbol{\Gamma}_w\!\left(l\right) \triangleq \boldsymbol{\Psi}_w\!\left(l\right) \boldsymbol{V}^T_w\!\left(l\right)$. By defining $M_w \times 1$ vectors of coefficients $\boldsymbol{\alpha}_i\!\left(l\right) \triangleq [\alpha^{\left(1\right)}_i\!\left(l\right),\ldots,\alpha^{\left(M_w\right)}_i\!\left(l\right)]^T,~i=1,\ldots,\tau$, the $j$th column of $\boldsymbol{\Gamma}_w\!\left(l\right)$ is derived as
\begin{equation}\label{Prod_jam_precod_pilot}
\boldsymbol{\gamma}^{\left(j\right)}_{w}\!\left(l\right) = \sum_{i=1}^{\tau} \left(\boldsymbol{v}^{\left(j\right)^T}_{w}\!\left(l\right)\boldsymbol{\alpha}_{i}\!\left(l\right)\right) \boldsymbol{\phi}_{i} .
\end{equation}
Utilizing \eqref{Prod_jam_precod_pilot}, the power constraint on the transmitted jamming signals is given by
\begin{equation}\label{power_cons_jam}
\sum_{i=1}^{\tau} \sum_{j=1}^{M_w} \left|\boldsymbol{v}^{\left(j\right)^T}_{w}\!\left(l\right)\boldsymbol{\alpha}_{i}\!\left(l\right)\right|^2 = M_w
\end{equation}
after some algebraic simplifications.
\begin{figure*}[!t]
\setcounter{equation}{14}
\begin{equation}\label{log_likelihood_all}
\ln \left(f\left(\left\{\boldsymbol{Y\!}_w\!\left(l\right)\right\}_{l=1}^{L} ; \tilde{q} , \mathcal{H}_{1} \right)\right) = -\frac{M_rL\left(\tau-K\right)}{2} \ln \left(2\pi \right) - \frac{M_rL}{2} \ln \left(1+\left(\tau-K\right)\tilde{q} \right) 
- \frac{1}{2} \sum_{l=1}^{L} \sum_{m=1}^{M_r} \left(\|\tilde{\boldsymbol{y}}_m\!\left(l\right)\|^2-\frac{\tilde{q}|\tilde{\boldsymbol{y}}_m^H\!\left(l\right) \boldsymbol{1}|^2}{1+\left(\tau-K\right)\tilde{q}} \right)
\end{equation}
\hrulefill
\end{figure*}
\begin{lem}\label{lem_1}
(\textbf{Asymptotic spectral efficiency}) In a multi-user MIMO system by utilizing the minimum mean-squared error estimate in the pilot phase and the maximum ratio combiner in the data phase as well as for any jamming precoder in the data phase that is independent of the channels to the BS and follows the power constraint on the transmitted jamming signals, the asymptotic spectral efficiency is defined by $\mathbb{E}\left\{C^{(\text{asy})}\!\left(l\right)\right\}$ where\footnote{We use the notation $\mathcal{J}_1\!\left(M_r\right) \asymp \mathcal{J}_2\!\left(M_r\right)$ for two arbitrary functions $\mathcal{J}_1\left(M_r\right)$ and $\mathcal{J}_2\!\left(M_r\right)$, if $\mathcal{J}_1\!\left(M_r\right) - \mathcal{J}_2\!\left(M_r\right) \xrightarrow[M_r \rightarrow \infty]{\as} 0$.}
\setcounter{equation}{4}
\begin{equation}\label{ASE}
C^{(\text{asy})}\!\left(l\right) \asymp \left(1-\frac{\tau}{T}\right) \sum_{i=1}^{K} \log_2 \left(1+\frac{\frac{p}{q} \frac{\rho}{\varrho} \left(\frac{\beta_i}{\beta_w} \right)^2}{M_w \sum_{j=1}^{M_w} \left|\boldsymbol{v}^{\left(j\right)^T}_{w}\!\left(l\right)\boldsymbol{\alpha}_{i}\!\left(l\right)\right|^2} \right) .
\end{equation}
The scalar $\rho$ represents the average transmit powers of the users and $\varrho$ represents the average transmit power of the jammer per antenna in the data phase. The necessary condition for \eqref{ASE} is $\sum_{j=1}^{M_w} |\boldsymbol{v}^{\left(j\right)^T}_{w}\!\left(l\right)\boldsymbol{\alpha}_{i}\!\left(l\right)|^2 > 0$ for all $i \in \left\{1,\ldots,K\right\}$, otherwise $C^{(\text{asy})}\!\left(l\right) \rightarrow \infty$.
\end{lem}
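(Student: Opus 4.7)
The plan is to combine MMSE channel estimation in the pilot phase with maximum ratio combining (MRC) in the data phase, and then apply channel-hardening arguments (strong laws for quadratic forms in i.i.d.\ complex Gaussians, together with the Wishart limit $\boldsymbol{G}_w^H(l)\boldsymbol{G}_w(l)/M_r \xrightarrow{\as} \beta_w \boldsymbol{I}_{M_w}$) to identify the surviving terms as $M_r \to \infty$.

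First I would despread \eqref{received_matrix} against $\boldsymbol{\phi}_i^*$, obtaining $\boldsymbol{y}_i(l) = \sqrt{\tau p}\,\boldsymbol{g}_i(l) + \sqrt{\tau q}\,\boldsymbol{G}_w(l)\boldsymbol{V}_w(l)\boldsymbol{\alpha}_i(l) + \boldsymbol{n}_i(l)$. Because $\boldsymbol{V}_w(l)$ and $\boldsymbol{\alpha}_i(l)$ are independent of the BS channels, the entries of $\boldsymbol{y}_i(l)$ are i.i.d.\ complex Gaussian with variance $\sigma_i^2(l) = \tau p \beta_i + \tau q \beta_w \|\boldsymbol{V}_w(l)\boldsymbol{\alpha}_i(l)\|^2 + 1$, so the MMSE estimate reduces to the scalar multiple $\hat{\boldsymbol{g}}_i(l) = \sqrt{\tau p}\,\beta_i\,\sigma_i^{-2}(l)\,\boldsymbol{y}_i(l)$. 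I would then write the MRC output $\hat{\boldsymbol{g}}_i^H(l)\boldsymbol{r}_d(l)$, where the data-phase signal is $\boldsymbol{r}_d(l) = \sqrt{\rho}\sum_k \boldsymbol{g}_k(l)s_k(l) + \sqrt{\varrho}\,\boldsymbol{G}_w(l)\boldsymbol{u}_w^d(l) + \boldsymbol{n}_d(l)$, and split it into desired signal, inter-user interference, data-phase jamming, and thermal noise.

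Dividing each inner product by $M_r$, the strong laws give (i) $\hat{\boldsymbol{g}}_i^H(l)\boldsymbol{g}_i(l)/M_r \to \tau p\beta_i^2/\sigma_i^2(l)$ a.s.; (ii) $\hat{\boldsymbol{g}}_i^H(l)\boldsymbol{g}_k(l)/M_r \to 0$ for $k \neq i$, since $\boldsymbol{y}_i(l)$ and $\boldsymbol{g}_k(l)$ are independent Gaussians; (iii) the two noise inner products vanish; and (iv) only the $\boldsymbol{G}_w$-component of $\hat{\boldsymbol{g}}_i$ couples nontrivially with the data-phase jamming, yielding $\hat{\boldsymbol{g}}_i^H(l)\boldsymbol{G}_w(l)/M_r \to \sqrt{\tau p\,\tau q}\,\beta_i\beta_w\,\sigma_i^{-2}(l)\,\boldsymbol{\alpha}_i^H(l)\boldsymbol{V}_w^H(l)$ via the Wishart limit. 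Thus the signal power and the data-jamming power both scale as $M_r^2$, while inter-user interference and noise are only $O(M_r)$ and drop out of the asymptotic ratio. Assembling the SINR, $\sigma_i^2(l)$ cancels between numerator and denominator, the $\tau p$ and $\tau q$ prefactors combine cleanly, and after imposing the data-phase power budget (total $M_w\varrho$ across antennas) one is left with exactly $\frac{(p/q)(\rho/\varrho)(\beta_i/\beta_w)^2}{M_w \sum_j |\boldsymbol{v}_w^{(j)T}(l)\boldsymbol{\alpha}_i(l)|^2}$ inside the $\log_2(1+\cdot)$, with the $(1-\tau/T)$ prelog accounting for pilot overhead. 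In the degenerate case $\sum_j |\boldsymbol{v}_w^{(j)T}(l)\boldsymbol{\alpha}_i(l)|^2 = 0$ the jammer leaves $\boldsymbol{\phi}_i$ untouched, no pilot contamination enters $\hat{\boldsymbol{g}}_i$, and the asymptotic SINR grows without bound, which justifies the necessary condition. The hard part is the bookkeeping: the expansion of the MRC output produces many $O(M_r)$ cross terms, and each must be shown to be negligible relative to the two dominant $O(M_r^2)$ contributions while scalar coefficients are tracked precisely enough that the $\beta_i^2/\beta_w^2$ and $p\rho/(q\varrho)$ ratios emerge in the form stated in \eqref{ASE}.
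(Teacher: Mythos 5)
The paper does not actually carry out this proof---it defers entirely to the analysis in \cite[Section II]{r8}---and your sketch reconstructs precisely that standard argument: despread \eqref{received_matrix}, form the scalar LMMSE coefficient, apply MRC, and use almost-sure limits of the normalized inner products so that only the desired-signal and data-phase-jamming terms survive at order $M_r^2$. Your steps are consistent with the model and with the stated result (the only thin spot is the final bookkeeping showing that the per-antenna power budget of an \emph{arbitrary} channel-independent data-phase precoder yields exactly the factor $M_w\sum_{j}|\boldsymbol{v}^{(j)^T}_{w}\!\left(l\right)\boldsymbol{\alpha}_{i}\!\left(l\right)|^2$ in the denominator), so this is essentially the same route as the cited derivation.
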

\begin{proof}
Lemma \ref{lem_1} is derived analogously to the analysis presented in \cite[Section II]{r8}, which is omitted here due to limit of space.
\end{proof}

The number of pilots is fixed in practice, while the number of users vary and is usually below the peak value \cite{r10}. Hence, there are generally $\tau-K \geq 1$ unused pilots in the network which can be exploited to improve the security of the network. To this end, we assume that the BS uses a pseudo-random pilot hopping technique to assign the pilots randomly to the users \cite{r10}. This technique prevents the jammer from estimating which pilots are assigned to the users.
\begin{rem}
(\textbf{Jammer's best strategy}) In general, the jammer's strategy depends on its objective and available information. As we said before, here the jammer aims to prevent the asymptotic spectral efficiency in \eqref{ASE} from unlimitedly increasing. If at least one of $\sum_{j=1}^{M_w} |\boldsymbol{v}^{\left(j\right)^T}_{w}\!\left(l\right)\boldsymbol{\alpha}_{i}\!\left(l\right)|^2 = 0 , ~ i=1,\ldots,K$, then $C^{(\text{asy})}\!\left(l\right) \rightarrow \infty$. This means that the jammer will not successfully limit the asymptotic spectral efficiency of the MaMIMO systems, if it does not attack all users. Furthermore, the jammer does not know the current assigned pilot to each user. Hence, the best strategy that it can adopt, is to divide its power equally over all pilots, i.e.,
\begin{equation}\label{Uniformly_power_divide_V1}
\sum_{j=1}^{M_w} \left|\boldsymbol{v}^{\left(j\right)^T}_{w}\!\left(l\right)\boldsymbol{\alpha}_{1}\!\left(l\right)\right|^2 = \ldots = \sum_{j=1}^{M_w} \left|\boldsymbol{v}^{\left(j\right)^T}_{w}\!\left(l\right)\boldsymbol{\alpha}_{\tau}\!\left(l\right)\right|^2 .
\end{equation}
From this strategy and the power constraint in \eqref{power_cons_jam}, we have
\begin{equation}\label{Uniformly_power_divide}
\sum_{j=1}^{M_w} \left|\boldsymbol{v}^{\left(j\right)^T}_{w}\!\left(l\right)\boldsymbol{\alpha}_{i}\!\left(l\right)\right|^2 = \frac{M_w}{\tau} .
\end{equation}
\end{rem}

\subsection{Utilization of Unused Pilots}
By substituting the definition of the matrix $\boldsymbol{\Gamma}_w\!\left(l\right)$ into \eqref{received_matrix} and using \eqref{Prod_jam_precod_pilot}, the received signal at the BS can be rewritten as
\begin{multline}\label{received_mat_simpler}
\boldsymbol{Y}\!\left(l\right) = \sum_{i=1}^{K} \left(\sqrt{\tau p} \boldsymbol{g}_{i}\!\left(l\right) + \sum_{j=1}^{M_w} \sqrt{\tau q} \left(\boldsymbol{v}^{\left(j\right)^T}_{w}\!\left(l\right)\boldsymbol{\alpha}_{i}\!\left(l\right)\right) \boldsymbol{g}^{\left(j\right)}_{w}\!\left(l\right) \right) \boldsymbol{\phi}^T_{i} \\
+ \sum_{i=K+1}^{\tau} \sum_{j=1}^{M_w} \sqrt{\tau q} \left(\boldsymbol{v}^{\left(j\right)^T}_{w}\!\left(l\right)\boldsymbol{\alpha}_{i}\!\left(l\right)\right) \boldsymbol{g}^{\left(j\right)}_{w}\!\left(l\right) \boldsymbol{\phi}^T_{i} + \boldsymbol{N}\!\left(l\right) .
\end{multline}
For ease of exposition and without loss of generality, it is assumed in \eqref{received_mat_simpler} that the first $K$ pilots of the pilot set are assigned to the legitimate users and the remaining pilots are unused, i.e., $\boldsymbol{\phi}_1\!\left(l\right)=\boldsymbol{\phi}_1,\ldots, \boldsymbol{\phi}_K\!\left(l\right)=\boldsymbol{\phi}_{K}$. Since the signal received along each of the unused pilots includes only the jamming signals and noise, we exploit it to detect the jamming attack. By projecting $\boldsymbol{Y}\!\left(l\right)$ on each of the unused pilots, we have
\begin{multline}\label{Project_rec_sig}
\boldsymbol{y}_{i}\!\left(l\right) = \boldsymbol{Y}\!\left(l\right) \boldsymbol{\phi}_i^* = \sum_{j=1}^{M_w} \sqrt{\tau q} \left(\boldsymbol{v}^{\left(j\right)^T}_{w}\!\left(l\right)\boldsymbol{\alpha}_{i}\!\left(l\right)\right) \boldsymbol{g}^{\left(j\right)}_{w}\!\left(l\right) + \boldsymbol{n}_{i}\!\left(l\right) , \\ i=K+1,\ldots,\tau,
\end{multline}
where $\boldsymbol{n}_{i}\!\left(l\right)=\boldsymbol{N}\!\left(l\right) \boldsymbol{\phi}_i^*$ is the projection of the matrix $\boldsymbol{N}\!\left(l\right)$ on the $i$th pilot and $\boldsymbol{n}_{i}\!\left(l\right) \sim \mathcal{CN} \left( \boldsymbol{0} , \boldsymbol{I}_M \right)$. In the next section, we investigate problem of the jamming detection using the observed signals on the unused pilots in \eqref{Project_rec_sig}.

\section{Jamming Detection}\label{sec:Jam Det}
In this section, detection of the jamming attack is considered based on the likelihood functions of the signals observed on the unused pilots' directions. The hypothesis test is defined as follows
\begin{align}\label{hypo_test}
&\mathcal{H}_{0}:
\boldsymbol{y}_{i}\!\left(l\right) = \boldsymbol{n}_{i}\!\left(l\right) , \nonumber\\
&\mathcal{H}_{1}:
\boldsymbol{y}_{i}\!\left(l\right) = \sum_{j=1}^{M_w} \sqrt{\tau q} \left(\boldsymbol{v}^{\left(j\right)^T}_{w}\!\left(l\right)\boldsymbol{\alpha}_{i}\!\left(l\right)\right) \boldsymbol{g}^{\left(j\right)}_{w}\!\left(l\right) + \boldsymbol{n}_{i}\!\left(l\right) .
\end{align}
The hypotheses $\mathcal{H}_{0}$ and $\mathcal{H}_{1}$ respectively denote the absence and presence of the jammer. The vectors $\boldsymbol{g}^{\left(j\right)}_{w}\!\left(l\right)$ and $\boldsymbol{n}_i\!\left(l\right)$ have Gaussian distributions. Let $\boldsymbol{Y\!}_w\!\left(l\right) \triangleq \left[\boldsymbol{y}_{K+1}\!\left(l\right),\ldots,\boldsymbol{y}_{\tau}\!\left(l\right) \right]$ be an $M_r \times \left(\tau-K\right)$ matrix. Under $\mathcal{H}_1$, the same terms $\boldsymbol{g}^{\left(j\right)}_w\!\left(l\right),~j=1,\ldots,M_w$ are repeated in all columns of $\boldsymbol{Y\!}_w\!\left(l\right)$. Therefore, there is a correlation between the elements of each row of the matrix $\boldsymbol{Y\!}_w\!\left(l\right)$. If the $m$th column of $\boldsymbol{Y\!}_w^T\!\left(l\right)$ is denoted by $\tilde{\boldsymbol{y}}_{m}\!\left(l\right) \in \mathbb{C}^{\left(\tau-K\right) \times 1}$, its covariance matrix is calculated as
\begin{equation}
\boldsymbol{C}\left(\tilde{q}\right) = \mathbb{E}\left\{\tilde{\boldsymbol{y}}_{m}\!\left(l\right) \tilde{\boldsymbol{y}}_{m}^H\!\left(l\right)\right\} = \boldsymbol{I}_{\left(\tau-K\right)} + \tilde{q}~ \boldsymbol{1}\boldsymbol{1}^T , \nonumber
\end{equation}
where $\tilde{q} \triangleq q M_w \beta_w$. The likelihood function of $\tilde{\boldsymbol{y}}_{m}\!\left(l\right)$ is obtained as
\begin{equation}\label{likelihood_row}
f\left(\tilde{\boldsymbol{y}}_{m}\!\left(l\right) ; \tilde{q} , \mathcal{H}_1 \right) = \frac{\exp \left(-\frac{1}{2}\tilde{\boldsymbol{y}}^H_{m}\!\left(l\right) \boldsymbol{C}^{-1}\left(\tilde{q}\right) \tilde{\boldsymbol{y}}_{m}\!\left(l\right) \right)}{\sqrt{\left(2\pi \right)^{\tau-K}\dtt \left(\boldsymbol{C}\left(\tilde{q}\right)\right)}} .
\end{equation}
Since the rows of $\boldsymbol{Y\!}_w\!\left(l\right)$ are independent from each other and also from the elements of $\boldsymbol{Y\!}_w\!\left(l'\right)$ for $l \neq l'$, we have
\begin{equation}\label{likelihood_all_H1}
f\left(\left\{\boldsymbol{Y\!}_w\!\left(l\right)\right\}_{l=1}^{L} ; \tilde{q} , \mathcal{H}_1 \right) = \prod_{l=1}^{L} \prod_{m=1}^{M_r}\frac{\exp \left(-\frac{1}{2}\tilde{\boldsymbol{y}}_{m}^H\!\left(l\right) \boldsymbol{C}^{-1}\left(\tilde{q}\right) \tilde{\boldsymbol{y}}_{m}\!\left(l\right) \right)}{\sqrt{\left(2\pi \right)^{\tau-K}\dtt\left(\boldsymbol{C}\left(\tilde{q}\right)\right)}} .
\end{equation}
Considering \eqref{hypo_test}, the likelihood function under $\mathcal{H}_0$ is obtained by setting $\tilde{q} = 0$ in \eqref{likelihood_all_H1}. Hence, we have 
\begin{equation}\label{likelihood_all_H0}
f\left(\left\{\boldsymbol{Y\!}_w\!\left(l\right)\right\}_{l=1}^{L} ; \mathcal{H}_0 \right) = \prod_{l=1}^{L} \prod_{m=1}^{M_r}\frac{\exp \left(-\frac{1}{2}\|\tilde{\boldsymbol{y}}_{m}\!\left(l\right)\|^2\right)}{\sqrt{\left(2\pi \right)^{\tau-K}}} .
\end{equation}
Since in \eqref{likelihood_all_H1}, $\tilde{q}$ is an unknown parameter, we exploit the GLRT \cite{r11} to decide which hypothesis is true. This enables us to detect the jamming. The detector derived from the GLRT is given by
\begin{equation}\label{glrt_test}
\mathcal{L}\left(\left\{\boldsymbol{Y\!}_w\!\left(l\right)\right\}_{l=1}^{L}\right) = \frac{f\left(\left\{\boldsymbol{Y\!}_w\!\left(l\right)\right\}_{l=1}^{L} ; \tilde{q} , \mathcal{H}_1 \right)}{f\left(\left\{\boldsymbol{Y\!}_w\!\left(l\right)\right\}_{l=1}^{L} ; \mathcal{H}_0 \right)} \mathop{\gtrless}_{\mathcal{H}_0}^{\mathcal{H}_1} \mu ,
\end{equation}
where $\mu$ is a threshold parameter for detecting the presence of the jammer. In order to implement the test in \eqref{glrt_test}, the BS estimates $\tilde{q}$ by utilizing maximum likelihood (ML) estimation \cite{r12} under $\mathcal{H}_1$. An equivalent expression for \eqref{likelihood_all_H1} is obtained as \eqref{log_likelihood_all}, shown at the top of this page. The ML estimation of $\tilde{q}$ is performed by maximizing the log-likelihood function. Hence, it is obtained by differentiating \eqref{log_likelihood_all} with respect to $\tilde{q}$ and solving the following equation
\setcounter{equation}{15}
\begin{equation}\label{zero_eq_mle}
\frac{M_rL\left(\tau-K\right)}{1+\left(\tau-K\right)\tilde{q}} - \sum_{l=1}^{L} \sum_{m=1}^{M_r} \frac{|\tilde{\boldsymbol{y}}_{m}^H\!\left(l\right) \boldsymbol{1}|^2}{\left(1+\left(\tau-K\right)\tilde{q} \right)^2}=0.
\end{equation}
Therefore, the ML estimate of $\tilde{q}$, which is denoted by $\hat{\tilde{q}}$, is
\begin{equation}\label{mle_jam_power}
\hat{\tilde{q}} = \frac{\sum_{l=1}^{L} \sum_{m=1}^{M_r}|\tilde{\boldsymbol{y}}_{m}^H\!\left(l\right) \boldsymbol{1}|^2}{M_rL\left(\tau-K\right)^2} - \frac{1}{\tau-K}.
\end{equation}
Replacing $\tilde{q}$ in \eqref{glrt_test} by its ML estimate from \eqref{mle_jam_power}, the test is calculated in logarithmic form as
\begin{align}\label{glrt_calc}
\ln\left(\mathcal{L}\left(\left\{\boldsymbol{Y\!}_w\!\left(l\right)\right\}_{l=1}^{L} \right)\right) &= -\frac{M_rL}{2} \ln \left(1+\left(\tau-K\right)\hat{\tilde{q}} \right) \nonumber\\
& \quad + \frac{\hat{\tilde{q}}\sum_{l=1}^{L} \sum_{m=1}^{M_r}|\tilde{\boldsymbol{y}}_{m}^H\!\left(l\right) \boldsymbol{1}|^2}{2\left(1 + \left(\tau-K\right) \hat{\tilde{q}} \right)} \mathop{\gtrless}_{\mathcal{H}_0}^{\mathcal{H}_1} \ln \left(\mu \right) .
\end{align}
Some manipulations of the above inequality lead to
\begin{equation}\label{glrt_calc_eq2}
- \ln \left(1+\left(\tau-K\right)\hat{\tilde{q}} \right) + \left(\tau-K\right) \hat{\tilde{q}} \mathop{\gtrless}_{\mathcal{H}_0}^{\mathcal{H}_1} \frac{2}{M_rL}\ln \left(\mu \right) .
\end{equation}
Since $\tilde{q}$ must be nonnegative, the negative values of $\hat{\tilde{q}}$ in \eqref{mle_jam_power} are ignored and replaced by $\hat{\tilde{q}}=0$. By defining a monotonically increasing function $\mathcal{J}\!\left(x\right) \triangleq x - \ln \left(1+x\right)$ for $x \geq 0$ and using \eqref{glrt_calc_eq2}, we have
\begin{equation}\label{glrt_calc_eq3}
\mathcal{J}\!\left(\left(\tau-K\right)\hat{\tilde{q}}\right) \mathop{\gtrless}_{\mathcal{H}_0}^{\mathcal{H}_1} \frac{2}{M_rL}\ln \left(\mu \right) ,
\end{equation}
or equivalently
\begin{equation}\label{glrt_calc_eq3}
\hat{\tilde{q}}~ \mathop{\gtrless}_{\mathcal{H}_0}^{\mathcal{H}_1} \frac{1}{\left(\tau-K\right)} \mathcal{J}^{-1}\!\left(\frac{2}{M_rL}\ln \left(\mu \right)\right) = \mu' .
\end{equation}
Therefore, we can define a positive threshold $\mu'$ and compare $\hat{\tilde{q}} \mathop{\gtrless}_{\mathcal{H}_0}^{\mathcal{H}_1} \mu'$ for detecting the jamming attacks.

\subsection{False Alarm Probability}
When the jammer is not present in the network but the BS detects its presence by mistake, a ``false alarm'' will occur. The false alarm probability is defined as $P_{FA} \triangleq \Pr \left( \hat{\tilde{q}} > \mu' ; \mathcal{H}_0 \right)$.
\begin{theo}\label{theo_1}
In MaMIMO systems, the false alarm probability of the proposed jamming detector is
\begin{equation}\label{false_prob}
P_{FA} = 1 - \frac{\gamma \left(M_rL, \frac{M_rL\left(\tau-K\right) \mu' + M_rL}{2} \right)}{\Gamma \left(M_rL \right)} ,
\end{equation}
where $\Gamma\left(\cdot \right)$ and $\gamma\left(\cdot , \cdot \right)$ are the gamma and lower incomplete gamma functions, respectively. The false alarm probability in \eqref{false_prob} asymptotically behaves as
\begin{equation}\label{false_asymp}
P_{FA} \asymp Q\left(\sqrt{M_rL} \mu' \left(\tau-K\right) \right) ,
\end{equation}
when $M_r \rightarrow \infty$. The notation $Q\left(\cdot \right)$ represents the Q-function.
\end{theo}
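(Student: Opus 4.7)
The plan is to work out the null distribution of the test statistic $\hat{\tilde{q}}$ exactly, then take $M_r\to\infty$ for the asymptotic form. Under $\mathcal{H}_0$, we have $\boldsymbol{y}_i\!\left(l\right)=\boldsymbol{n}_i\!\left(l\right)$, so every entry of every $\tilde{\boldsymbol{y}}_m\!\left(l\right)$ is an independent $\mathcal{CN}(0,1)$ variable, and the whole family $\{\tilde{\boldsymbol{y}}_m\!\left(l\right)\}_{m,l}$ is mutually independent across antennas $m$ and coherence blocks $l$.

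First I would identify the law of $Z_{m,l}\triangleq|\tilde{\boldsymbol{y}}_m^H\!\left(l\right) \boldsymbol{1}|^2$. Since $\tilde{\boldsymbol{y}}_m^H\!\left(l\right) \boldsymbol{1}$ is a sum of $\tau-K$ i.i.d.\ complex-Gaussian entries (with all coefficients equal to one), it is itself zero-mean complex Gaussian whose variance scales with $\tau-K$, and so $Z_{m,l}$ is $(\tau-K)$ times a chi-squared random variable with two degrees of freedom. Independence across $m$ and $l$ then implies that the aggregate $S\triangleq\sum_{l=1}^{L}\sum_{m=1}^{M_r}Z_{m,l}$ is $(\tau-K)$ times a chi-squared random variable with $2M_rL$ degrees of freedom.

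Next, using the closed-form ML estimate (17), the detection event $\{\hat{\tilde{q}}>\mu'\}$ is algebraically equivalent to $\{S>M_rL(\tau-K)[(\tau-K)\mu'+1]\}$ after a one-line rearrangement. Plugging this threshold into the chi-squared CDF identity $\Pr(\chi^2_{2k}\le x)=\gamma(k,x/2)/\Gamma(k)$ with $k=M_rL$ then yields the exact expression (21).

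For the asymptotic form (22), I would invoke the central limit theorem on $S$: with $\mathbb{E}[S]=M_rL(\tau-K)$ and $\mathrm{Var}[S]=M_rL(\tau-K)^2$ under the standard $\mathcal{CN}$ scaling, standardizing the threshold $S>M_rL(\tau-K)[(\tau-K)\mu'+1]$ converts it to $Z>\sqrt{M_rL}(\tau-K)\mu'$ with $Z$ asymptotically standard normal, which yields $P_{FA}\asymp Q(\sqrt{M_rL}(\tau-K)\mu')$ as $M_r\rightarrow\infty$. The main obstacle is bookkeeping the complex-Gaussian normalization consistently throughout, since a spurious factor of two inside $\gamma$ or inside the argument of $Q$ can easily arise from conflating the conventions $|\mathcal{CN}(0,1)|^2\sim\chi_2^2/2$ versus $\sim\chi_2^2$; the convention implicit in the likelihood (11) must be matched carefully to land exactly on the argument $[M_rL(\tau-K)\mu'+M_rL]/2$ appearing in (21).
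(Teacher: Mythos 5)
Your proposal follows the paper's proof essentially verbatim: the paper likewise notes that under $\mathcal{H}_0$ the variable $\tilde{\boldsymbol{y}}_{m}^H\!\left(l\right)\boldsymbol{1}$ is zero-mean Gaussian with variance $\tau-K$, so that $r=\frac{1}{\tau-K}\sum_{l,m}|\tilde{\boldsymbol{y}}_{m}^H\!\left(l\right)\boldsymbol{1}|^2$ is chi-squared with $2M_rL$ degrees of freedom, rewrites $\{\hat{\tilde{q}}>\mu'\}$ as $\{r>M_rL\left(\tau-K\right)\mu'+M_rL\}$, applies the incomplete-gamma CDF, and then invokes the Gaussian limit of $\hat{\tilde{q}}$ to obtain \eqref{false_asymp}. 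The factor-of-two bookkeeping you flag is a genuine hazard (your stated $\mathbb{E}[S]=M_rL(\tau-K)$ is not consistent with $Z_{m,l}\sim(\tau-K)\chi_2^2$, whose mean is $2(\tau-K)$), but the paper's own proof carries the same normalization ambiguity between its exact and asymptotic expressions, so your argument reproduces the intended derivation.
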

\begin{proof}
Under $\mathcal{H}_0$, $\tilde{\boldsymbol{y}}_{m}^H\!\left(l\right) \boldsymbol{1}$ is a zero-mean Gaussian random variable with variance $\tau-K$. Therefore, $r \triangleq \frac{1}{\tau-K} \sum_{l=1}^{L} \sum_{m=1}^{M_r}|\tilde{\boldsymbol{y}}_{m}^H\!\left(l\right) \boldsymbol{1}|^2$ is distributed as a chi-square random variable with $2M_rL$ degrees of freedom. Hence, we can obtain the false alarm probability
\begin{align}\label{false_proof}
P_{FA} &\triangleq \Pr \left( \hat{\tilde{q}} > \mu' ; \mathcal{H}_0 \right)\nonumber \\
&= \Pr \left( r > M_rL\left(\tau-K\right) \mu' + M_rL \right) \nonumber\\
&= 1 - \frac{\gamma \left(M_rL , \frac{M_rL\left(\tau-K\right) \mu' + M_rL}{2} \right)}{\Gamma \left(M_rL \right)}.
\end{align}
The variable $r$ converges in distribution to a Gaussian random variable as the number of degrees of freedom tends to infinity. Therefore, $\hat{\tilde{q}}$ converges in probability to $\mathcal{N} \left(0 , \frac{1}{M_rL\left(\tau-K\right)^2} \right)$ as $M_r \rightarrow \infty$. According to this distribution, we have $P_{FA} \asymp Q(\sqrt{M_rL} \mu' (\tau-K))$.
\end{proof}
The false alarm probability is a criterion to choose an appropriate value for $\mu'$. For the chosen $\mu'$, we obtain the correct detection probability in next subsection.

\subsection{Correct Detection Probability}
If the BS successfully detects the presence of the jammer, a ``correct detection'' will occur and its probability is defined as $P_C \triangleq \Pr \left( \hat{\tilde{q}} >\mu' ; \mathcal{H}_1 \right)$.
\begin{theo}\label{theo_2}
In MaMIMO systems, the correct detection probability of the proposed jamming detector is
\begin{equation}\label{correct_prob}
P_{C} = 1 - \frac{\gamma \left(M_rL , \frac{M_rL\left(\tau-K\right) \mu' + M_rL}{2\left(1+\left(\tau-K\right)\tilde{q}\right)} \right)}{\Gamma \left(M_rL \right)} ,
\end{equation}
which asymptotically behaves as
\begin{equation}\label{correct_asymp}
P_C \asymp 1-Q\left(\sqrt{M_rL} \frac{\left(\tilde{q} - \mu' \right) \left(\tau-K\right)}{1+\left(\tau-K\right) \tilde{q}} \right)
\end{equation}
when $M_r \rightarrow \infty$.
\end{theo}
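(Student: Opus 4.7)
The plan is to mirror the proof of Theorem~\ref{theo_1}, tracking how a nonzero $\tilde q$ perturbs the distribution of the underlying quadratic form. The key observation is that under $\mathcal{H}_1$ each $\tilde{\boldsymbol{y}}_m(l)$ has covariance $\boldsymbol{C}(\tilde q)=\boldsymbol{I}_{\tau-K}+\tilde q\,\boldsymbol{1}\boldsymbol{1}^T$, so the projection $\tilde{\boldsymbol{y}}_m^H(l)\boldsymbol{1}$ is a zero-mean Gaussian whose variance becomes $\boldsymbol{1}^T\boldsymbol{C}(\tilde q)\boldsymbol{1}=(\tau-K)(1+(\tau-K)\tilde q)$ rather than $\tau-K$. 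Every remaining step of Theorem~\ref{theo_1} should carry over once this new scale is absorbed into the normalization.

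First I would introduce the normalized sum
\begin{equation*}
\tilde r \triangleq \frac{1}{(\tau-K)(1+(\tau-K)\tilde q)}\sum_{l=1}^{L}\sum_{m=1}^{M_r}|\tilde{\boldsymbol{y}}_m^H(l)\boldsymbol{1}|^2,
\end{equation*}
which, by the same argument used under $\mathcal{H}_0$, is a chi-square random variable with $2M_rL$ degrees of freedom. Plugging the ML estimator \eqref{mle_jam_power} into the event $\{\hat{\tilde q}>\mu'\}$ and dividing both sides by $(\tau-K)(1+(\tau-K)\tilde q)$ yields
\begin{equation*}
P_C = \Pr\!\left(\tilde r > \frac{M_rL(\tau-K)\mu'+M_rL}{1+(\tau-K)\tilde q}\right),
\end{equation*}
so \eqref{correct_prob} follows immediately from the chi-square CDF expressed through the regularized lower incomplete gamma function.

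For the asymptotic form \eqref{correct_asymp}, I would apply the CLT to $\hat{\tilde q}$, which is an affine function of the chi-square variable $\tilde r$ scaled by $1/(M_rL)$. A direct moment computation gives $\mathbb{E}[\hat{\tilde q}]=\tilde q$ (hence $\hat{\tilde q}$ is unbiased) and $\mathrm{Var}(\hat{\tilde q})=(1+(\tau-K)\tilde q)^2/(M_rL(\tau-K)^2)$, using the first two moments of a (scaled) chi-square with two degrees of freedom. Thus, as $M_r\to\infty$, $\hat{\tilde q}$ is asymptotically Gaussian about $\tilde q$; writing $\Pr(\hat{\tilde q}>\mu')$ in standardized form and invoking $1-Q(-x)=Q(x)$ delivers \eqref{correct_asymp}.

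The main obstacle, and essentially the only nontrivial step, is recognizing that the rank-one perturbation $\tilde q\,\boldsymbol{1}\boldsymbol{1}^T$ inside $\boldsymbol{C}(\tilde q)$ collapses, after projection onto $\boldsymbol{1}$, into the single scalar factor $1+(\tau-K)\tilde q$, so that the normalized quadratic form $\tilde r$ still obeys a clean chi-square law. Once that reduction is identified, the closed-form tail probability and its Gaussian limit are obtained simply by rescaling the threshold of the $\mathcal{H}_0$ computation by $1+(\tau-K)\tilde q$.
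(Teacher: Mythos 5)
Your proposal is correct and follows exactly the route the paper takes: the paper's proof of Theorem~\ref{theo_2} is literally ``same as Theorem~\ref{theo_1}, except that under $\mathcal{H}_1$ the variance of $\tilde{\boldsymbol{y}}_m^H(l)\boldsymbol{1}$ becomes $(\tau-K)(1+(\tau-K)\tilde{q})$,'' and your normalization of the quadratic form by that variance, the resulting chi-square tail, and the CLT on $\hat{\tilde{q}}$ (unbiased with variance $(1+(\tau-K)\tilde{q})^2/(M_rL(\tau-K)^2)$) are precisely the steps the authors leave implicit. Your writeup is in fact more explicit than the paper's own one-sentence proof.
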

\begin{proof}
The proof is similar to Theorem \ref{theo_1}, except that under $\mathcal{H}_1$, $\tilde{\boldsymbol{y}}_{m}^H\!\left(l\right) \boldsymbol{1}$ has the variance $\left(\tau-K\right)\left(1+ \left(\tau-K\right) \tilde{q} \right)$.
\end{proof}
\begin{rem}
Since $\tilde{q} \geq 0$ and provided that $\tilde{q} > \mu'$, from \eqref{correct_asymp}, we see that the correct detection probability of the proposed jamming detector increases by increasing the number of BS antennas. In particular, $P_C \rightarrow 1$ and $P_{FA} \rightarrow 0$ as $M_r \rightarrow \infty$.
\end{rem}

\section{Numerical Results}\label{sec:Simulation}
In this section, we numerically evaluate the behavior of the proposed detector. It is assumed that $M_w = 4$, $\beta_w = 1$, $\tau=10$ and also the number of independent runs for Monte Carlo simulations is $10^5$. We select the matrices $\boldsymbol{V}_w\!\left(l\right)$ and $\boldsymbol{\Gamma}_w\!\left(l\right)$ so that the constraint in \eqref{Uniformly_power_divide} is satisfied.

\begin{figure}
\centering
\includegraphics[width=0.42\textwidth]
{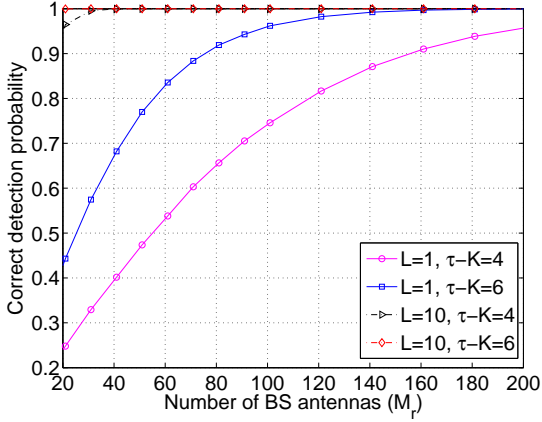}
\caption{Correct detection probability versus the number of BS antennas for different number of unused pilots and coherence blocks (for $P_{FA}=0.01$, and $q=-17 \dB$).}
\label{fig1}
\end{figure}

Fig. \ref{fig1} shows how the correct detection probability depends on the number of antennas, unused pilots, and coherence blocks that the detection is performed over, simulated under $P_{FA} = 0.01$ and $q=-17\dB$.\footnote{Since the noise variance is normalized to one, $q$ can be interpreted as the SNR.} The curves are plotted for two different number of users $K \in \{6,\,4\}$ (leading to the number of unused pilots $\tau-K \in \{4,\,6\}$) and two different number of the coherence blocks $L \in \{1,\,10\}$. It can be seen that by increasing the number of BS antennas as well as the number of the coherence blocks, the correct detection probability improves. We see that nearly perfect detection can be achieved, for example, by utilizing six unused pilots and $M_r=200$ antennas for $L=1$. Similarly, for $M_r=20$ and $L=10$, the number of required unused pilots for achieving perfect detection is set to six. However, for $M_r=50$ and $L=10$, perfect detection is achieved with only four unused pilots.

In Fig. \ref{fig2}, the correct detection probability of the proposed detector is depicted versus the false alarm probability for different values of the jammer's transmission power. It is assumed $M_r=100$, $L=10$ and there are two unused pilot sequences in the network ($K=8$). Moreover, when the value of $\mu'$ is lower than zero, nullifying the negative values of $\hat{\tilde{q}}$ leads to both the false alarm and correct detection probabilities become one. According to \eqref{false_asymp}, $\mu' < 0$ for $P_{FA} > 0.5$. Hence, the correct detection probability is plotted only for $P_{FA} \in \left[0,0.5\right]$. It can be seen in Fig. \ref{fig2} that the correct detection probability improves by increasing the jammer's transmit power. In addition, all the curves are above the line $P_C=P_{FA}$, even for low power at the jammer. Hence, the proposed detector behaves very well in MaMIMO systems. Furthermore, this simulation confirms our
asymptotic analyses as $M_r \rightarrow \infty$ which were presented in the previous section.

\begin{figure}
\centering
\includegraphics[width=0.42\textwidth]
{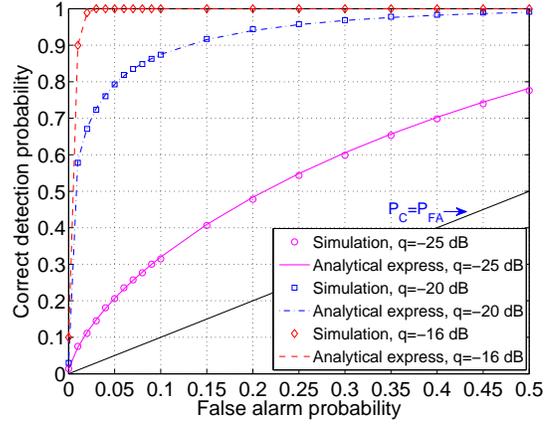}
\caption{Correct detection probability versus false alarm probability for different powers at the jammer (for $M_r=100$, $K=8$, and $L=10$).}
\label{fig2}
\end{figure}

\section{Conclusion}\label{sec:conclusion}
In this paper, a new jamming detector was proposed based on a GLRT to improve the security of MaMIMO networks. The proposed detector uses the unused pilots to estimate the received power of the multi-antenna jammer over multiple coherence blocks over the frequency domain, and potentially also time domain. Our analysis shows that the proposed jamming detector works particularly well in MaMIMO systems and it can be further enhanced by increasing the number of BS antennas, unused pilots, and also the coherence blocks that the detection is performed over. While prior works have proposed methods to mitigate jamming in MaMIMO, in practice, our detector can be used as a first step to determine if there exists any jamming that the legitimate system needs to mitigate.

\bibliographystyle{IEEE}

\begin{thebibliography}{1}

\bibitem{r1}
D. Kapetanovi\'{c}, G. Zheng, and F. Rusek,
\newblock ``Physical layer security for massive MIMO: An overview on passive evesdropping and active attacks,''
\newblock {\em IEEE Commun. Mag.}, vol. 53, no. 6, pp. 21-27, June 2015.


\bibitem{r3}
Y. Wu, R. Schober, D. W. K. Ng, C. Xiao, and G. Caire, \newblock ``Secure massive MIMO transmission with an active eavesdropper,'' \newblock{\em IEEE Trans. Inf. Theory}, vol. 62, no. 7, pp. 3880-3900, Jul. 2016.

\bibitem{r4}
D. Kapetanovi\'{c}, G. Zheng, K. K. Wong, and B. Ottersten, \newblock ``Detection of pilot contamination attack using random training and massive MIMO,'' in \newblock{\em Proc. IEEE Int. Symp. Pers. Indoor Mobile Radio Commun. (PIMRC)}, 2013, pp. 13-18.

\bibitem{r5}
D. Kapetanovi\'{c}, A. Al-Nahari, A. Stojanovic, and F. Rusek, \newblock ``Detection of active eavesdroppers in massive MIMO,'' in \newblock{\em Proc. IEEE Int. Symp. Pers. Indoor Mobile Radio Commun. (PIMRC)}, 2014, pp. 585-589.

\bibitem{r6}
J. Vinogradova, E. Bj\"{o}rnson, and E. G. Larsson, \newblock ``Detection and mitigation of jamming attacks in massive MIMO systems using random matrix theory,'' in \newblock{\em Proc. IEEE Int. Workshop on Sig.  Proc. Adv. in Wireless Commun. (SPAWC)}, 2016, pp. 1-5.

\bibitem{r7}
T. T. Do, E. Bj\"{o}rnson, E. G. Larsson, and S. M. Razavizadeh, \newblock ``Jamming-resistant receivers for massive MIMO uplink,'' in \newblock{\em IEEE Trans. Inf. Forensics Security}, vol. PP, no. 99, pp. 1-1. \\
doi: 10.1109/TIFS.2017.2746007

\bibitem{r8}
H. Pirzadeh, S. M. Razavizadeh, and E. Bj\"{o}rnson, \newblock ``Subverting massive MIMO by smart jamming,'' \newblock {\em IEEE Wireless Commun. Lett.}, vol. 5, no. 1, pp. 20-23, Feb. 2016.

\bibitem{r9}
E. G. Larsson, H. Q. Ngo, T. L. Marzetta, and Y. Hong, \newblock {\em Fundamentals of Massive MIMO}, Cambridge, NJ: Cambridge Univ. Press, 2016.

\bibitem{r10}
Y. O. Basciftci, C. E. Koksal, and A. Ashikhmin, \newblock ``Securing massive MIMO at the physical layer,'' in \newblock{\em IEEE Conf. on Commun. and Net. Sec.
(CNS)}, 2015, pp. 272-280.

\bibitem{r11}
S. M. Kay, \emph{Fundamentals of Statistical Signal Processing: Detection Theory.} Englewood Cliffs, NJ:~Prentice-Hall, 1998.

\bibitem{r12}
S. M. Kay, \newblock {\em Fundamentals of Statistical Signal Processing: Estimation Theory.} Englewood Cliffs, NJ: Prentice-Hall, 1993.

\end{thebibliography}

\end{document}